\documentclass[lettersize,journal]{IEEEtran}
\usepackage{amsmath,amsfonts}
\usepackage{algorithmic}
\usepackage{algorithm}
\usepackage{array}
\usepackage[caption=false,font=normalsize,labelfont=sf,textfont=sf]{subfig}
\usepackage{textcomp}
\usepackage{stfloats}
\usepackage{url}
\usepackage{verbatim}
\usepackage{graphicx}
\usepackage{cite}
\usepackage{color}
\usepackage{siunitx}
\usepackage{amssymb}
\usepackage{bm}
\usepackage{amsthm}
\newtheorem{assumption}{Assumption}
\newtheorem{corollary}{Corollary}
\hyphenation{op-tical net-works semi-conduc-tor IEEE-Xplore}

\begin{document}

\title{A Novel Framework of Horizontal-Vertical Hybrid Federated Learning for EdgeIoT} 

\author{Kai~Li,~\IEEEmembership{Senior Member,~IEEE,}
	Yilei~Liang,
	Xin~Yuan,~\IEEEmembership{Senior Member,~IEEE,}
	Wei~Ni,~\IEEEmembership{Fellow,~IEEE,}
	Jon~Crowcroft,~\IEEEmembership{Fellow,~IEEE,}
	Chau~Yuen,~\IEEEmembership{Fellow,~IEEE,}
        and Ozgur~B.~Akan,~\IEEEmembership{Fellow,~IEEE}
\thanks{K.~Li is with the Internet of Everything (IoE) Group, Department of Engineering, University of Cambridge, CB3 0FA Cambridge, UK, and also with Real-Time and Embedded Computing Systems Research Centre (CISTER), Porto 4249--015, Portugal (E-mail: kaili@ieee.org).}
\thanks{Y.~Liang and J.~Crowcroft are the Department of Computer Science and Technology, University of Cambridge, CB3 0FA Cambridge, UK (E-mail: yl841@cam.ac.uk \& Jon.Crowcroft@cl.cam.ac.uk).}
\thanks{X.~Yuan and W.~Ni are with Commonwealth Scientific and Industrial Research Organization (CSIRO), Sydney, NSW 2122, Australia, and the School of Computing Science and Engineering, University of New South Wales (E-mail: \{xin.yuan, wei.ni\}@data61.csiro.au).}
\thanks{C.~Yuen is with Nanyang Technological University, SG 639798, Singapore (E-mail: chau.yuen@ntu.edu.sg).}
\thanks{O. B. Akan is with the Internet of Everything (IoE) Group, Division of Electrical Engineering, Department of Engineering, University of Cambridge, CB3 0FA Cambridge, UK, and also with the Center for neXt-generation Communications (CXC), Ko\c c University, 34450 Istanbul, Turkey (E-mail: oba21@cam.ac.uk).}
\thanks{The source code of this work is released on GitHub: https://github.com/YukariSonz/VHFL\_SplitNN.}
}


\maketitle

\begin{abstract}
This letter puts forth a new hybrid horizontal-vertical federated learning (HoVeFL) for mobile edge computing-enabled Internet of Things (EdgeIoT). In this framework, certain EdgeIoT devices train local models using the same data samples but analyze disparate data features, while the others focus on the same features using non-independent and identically distributed (non-IID) data samples. Thus, even though the data features are consistent, the data samples vary across devices. The proposed HoVeFL formulates the training of local and global models to minimize the global loss function. Performance evaluations on CIFAR-10 and SVHN datasets reveal that the testing loss of HoVeFL with 12 horizontal FL devices and six vertical FL devices is 5.5\% and 25.2\% higher, respectively, compared to a setup with six horizontal FL devices and 12 vertical FL devices. 
\end{abstract}

\begin{IEEEkeywords}
Internet of Things, Edge Computing, Hybrid Federated Learning, Horizontal and Vertical, Non-IID Data
\end{IEEEkeywords}

\section{Introduction}
Federated learning (FL) presents a transformative solution to mobile edge computing-enabled Internet of Things (EdgeIoT). In an EdgeIoT-empowered hospital, each institution holds vast amounts of sensitive patient data collected from IoT devices, which can enhance the predictive capabilities of medical artificial intelligence (AI) models - critical for diagnoses, treatment planning, and outcomes prediction~\cite{li2022internet}. However, sharing patient data across institutions can pose privacy risks and violate regulations, such as The Health Insurance Portability and Accountability Act of 1996 in the United States or the Data Protection Act 2018 in the United Kingdom~\cite{ali2022federated}. 

EdgeIoT devices might have distinct sets of clients but collect unified data features on those patients, such as demographic information, medical histories, and treatment outcomes. Horizontal FL (HFL) can enable clinics to collaboratively train a machine learning model that captures the same features of non-independent and identically distributed (non-IID) data~\cite{park2022federated}. FL can also perform vertically, namely, vertical FL (VFL), where EdgeIoT trains on disparate types of information (i.e., data features) from the same clients~\cite{liu2022vertical}. 

In this letter, we propose a new hybrid horizontal-vertical FL (HoVeFL) for EdgeIoT, which enables some EdgeIoT devices to focus on the same features across non-IID samples and the others to focus on different features of the same samples. For instance, some EdgeIoT devices in HoVeFL analyze patient demographics and vital signs (e.g., age, blood pressure, and heart rate), while other devices analyze diagnostic images (e.g., X-rays and magnetic resonance imaging) from the same group of patients. Although the data samples (e.g., patients) are the same, the features (e.g., types of data) being analyzed are different across devices. Moreover, HoVeFL enables certain devices to concentrate on the same data features utilizing non-IID data samples. For example, some devices focus on patient vital signs (e.g., blood pressure and heart rate), but the data samples owned by each device are from different subsets of patients with varying conditions and demographics. Even though the features analyzed (vital signs) are the same, the actual data samples can be diverse and non-identically distributed across the devices.

HoVeFL has potential applications where data is rich in diversity and requires privacy, e.g., smart health systems with inherently heterogeneous devices and data sources. The EdgeIoT devices may vary widely in functionality. For example, wearable devices can monitor vitals such as heart rate and blood pressure, while diagnostic machines in hospitals might focus on more complex data like imaging scans~\cite{li2023towards}. In HoVeFL, some devices can engage in HFL with a focus on shared types of data (e.g., patient vitals) collected under varied conditions and demographics, addressing the non-IID nature of the data. Other devices can participate in VFL, analyzing different data types (e.g., imaging versus vitals) from the same patients. By utilizing HoVeFL, smart health systems can enhance medicine strategies of the EdgeIoT devices.

\section{Related Work}
A federated feature selection was studied in~\cite{zhang2023federated} to identify and discard irrelevant features using an enhanced one-class support vector machine. A feature relevance hierarchical clustering algorithm was presented to select overlapping features in HFL. In~\cite{arunan2023federated}, a feature similarity-matched parameter aggregation algorithm was studied to learn from diverse edge data. The algorithm examines the locally trained models from different sources and identifies neurons with probabilistically similar feature extraction functions. The authors of~\cite{you2022triple} focused on an informative device-activating HFL, which activates devices with abundant information to reduce communication costs. 

The authors of~\cite{shi2023vertical} introduced a VFL system that utilizes over-the-air computation to enhance model aggregation speed and accuracy, while reducing communication delays through aggregation among edge servers. The VFL combines transceiver and fronthaul quantization design, employing successive approximation to ensure efficient learning convergence. VFL was designed in~\cite{su2021hierarchical} to process hybrid data partitioning across IoT devices, each collecting unique type-specific features. A multitier-partitioned neural network architecture and a primal-dual transformation strategy were developed to decompose training across data samples and feature spaces. Under wireless communication constraints in VFL, enhancing model prediction accuracy while reducing system latency was studied in~\cite{yan2022latency} to address quantization and selective user participation in training. 

In~\cite{wang2020hybrid}, a framework employing hybrid differential privacy was developed for VFL. This framework constructs a generalized linear model using data that is vertically divided. The authors of~\cite{xu2021fedv} developed a hybrid framework to reduce communication overheads. This facilitates the development of precise models without the necessity for Taylor series approximations. In~\cite{lu2024two}, a two-dimensional hybrid FL framework was studied to address incomplete features. With an increasing number of dual-SIM-card users, VFL is employed to create a model for each pair of network service providers, integrating shareable features of users with dual SIM cards. 

The authors of~\cite{thapa2022splitfed} introduced a split learning method to improve data privacy and model robustness. The split learning combines HFL, which is parallel processing among distributed devices, and the split learning, which is network splitting into client-side and server-side sub-networks during training. In~\cite{liu2022wireless}, a split FL algorithm was developed to reap parallel model training and the model splitting structure of split learning. A multi-arm Bandit algorithm was studied to select user devices and the size of local model updates, balancing the channel qualities and the importance of local model updates.

\begin{figure}[htb]
\centering
\includegraphics[width=3.0in]{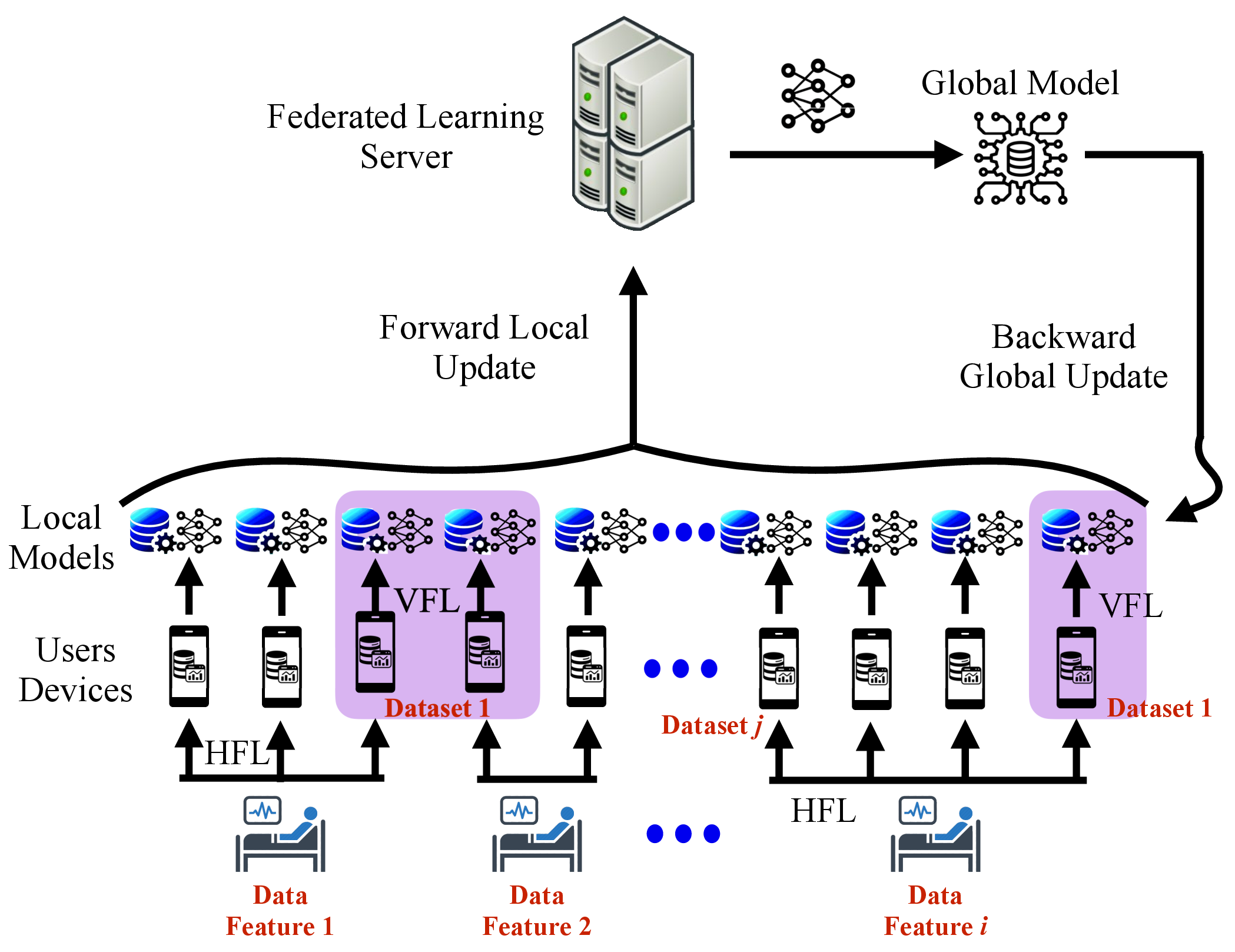}
\caption{A training process of the local and global models in HoVeFL, where each data feature $i$ ($i \in [1, I]$) is trained by $N_i$ devices. Moreover, each device has data samples $j$, thus, we have $n_{i,j} \in \bf{N_i} \cap \bf{N_j}$.} 
\label{fig_centralizedFL}
\end{figure}

The primary research gap addressed by the proposed HoVeFL framework lies in its integration of horizontal and vertical strategies within a single FL system. Existing models typically focus on either HFL, which processes shared data types across different populations, or VFL, which analyzes different data types for the same population but cannot leverage their respective strengths. HoVeFL bridges this gap by enabling simultaneous horizontal and vertical learning on heterogeneous data types and varied populations. This hybrid approach allows for more comprehensive model training that can adapt to non-IID data across devices, addressing the challenges of data diversity and distribution discrepancies more effectively than single HFL or VFL strategies. 

\section{Formulation of HoVeFL}
This section presents the training of the local and global models of HoVeFL in EdgeIoT for image classification as an example. Fig.~\ref{fig_centralizedFL} presents a HoVeFL training process with $N$ user devices, where data feature $i$ ($i \in [1, I]$) is trained by $N_i$ devices and data samples $j$ are trained by $N_j$ devices. Each device $n \in [1, N]$ has $n_{i,j}(t)$ with $j$ data samples and $i$ features at the $t$-th round of HoVeFL. Let $\pmb{m}^n_{i,j}(t) \in \mathbb{R}^{1\times M}$ denote the local model update in the $t$-th round. The training loss function of device $n$ is denoted by $f(\pmb{m}^n_{i,j}(t))$ and measures approximation errors based on training datasets in the $t$-th round. For example, the loss function can be modeled as linear or logistic regressions~\cite{li2024data}. 

Let $D_j^n(t)$ denote the data samples $j$ at device $n$. The local loss function of HoVeFL at device~$i$ for the $t$-th round is 
\begin{align}
F(\pmb{m}^n_{i,j}(t)) \!=\! \frac{1}{D_j^n(t)}\!\sum_{i=1}^{D_j^n(t)}& f\big(\pmb{m}^n_{i,j}(t) \big)\! +\! \alpha \zeta\big(\pmb{m}^n_{i,j}(t)\big), 
\label{eq_lossFunc}
\end{align}
where $\zeta(\cdot)$ is a regularizer function mitigating the effect of local training noise and potential overfitting due to overlapping input features among different devices. $\zeta(\cdot)$ contributes to the generalization ability of the model across the network. $\alpha \in [0,1]$ is a given coefficient.

Minimizing the local loss function in~\eqref{eq_lossFunc} leads to an optimal local model ${\pmb{m}^{n*}_{i,j}(t)}$ at device $n$, where
\begin{align}
{\pmb{m}^{n*}_{i,j}(t)} = \arg \min_{\forall \pmb{m}^n_{i,j}(t)} F(\pmb{m}^n_{i,j}(t)).
\label{eq_local_centralizedOptimal}
\end{align} 
With the learning rate $\mu_t$, the local model of device $i$ is updated using Stochastic Gradient Descent (SGD) for $T_L$ local iterations, throughout the $t$-th round: 
\begin{align}
\pmb{m}^n_{i,j}(t) \leftarrow \pmb{m}^n_{i,j}(t) - \mu_t \nabla F (\pmb{m}^n_{i,j}(t)).
\label{eq_local_SGD}
\end{align} 
\begin{figure*}
\begin{align}
F(\pmb{m}_G(t)) = \frac{1}{\sum_{i=1}^I N_i} \sum_{i=1}^I w^{n_i}_i(\pmb{m}^{1_i}_{i,1}(t) + \pmb{m}^{2_i}_{i,2}(t) + \cdots + \pmb{m}^{N_i}_{i,j}(t)) + 
\frac{1}{\sum_{j=1}^J N_j} \sum_{j=1}^J(w_1\pmb{m}^{1_j}_{1,j}(t) + w_2\pmb{m}^{2_j}_{2,j}(t) + \cdots + w^{n_j}_i\pmb{m}^{N_j}_{i,j}(t)). 
\label{eq_global_model}
\end{align} 
\hrulefill
\end{figure*}
After $T_L$ local iterations, the server aggregates $\pmb{m}^n_{i,j}(t), \forall i$ to obtain a global model, denoted by $\pmb{m}_G(t)$, for the $t$-th round. Then, $\pmb{m}_G(t)$ is broadcast to all devices for their training of $\pmb{m}^n_{i,j}(t+1),\,\forall i$ in the $(t+1)$-th round. The overlapping input features between the devices are handled differently. Although overlapping input features across devices are allowed, each device's participation in either HFL or VFL dictates its method of data integration. For VFL, where the devices might share partially overlapping input features on the same datasets, these features are aligned and integrated vertically to enrich the model's understanding of a device. This is achieved through techniques such as feature fusion or aggregation, ensuring that all relevant information from overlapping features is effectively utilized. For HFL, the emphasis is on leveraging the same set of features across different devices, and any overlapping features among the devices participating in HFL help enhance the generalizability and robustness of the model across diverse datasets~\cite{li2024leverage}. Thus, the overlap of input features is managed within the distinct protocols of either HFL or VFL, depending on the device's designated learning strategy.

In HoVeFL, $N_i$ users train the local models for the same data feature $i$ but on different data samples, namely, $\pmb{m}^1_{i,1}(t), \pmb{m}^2_{i,2}(t), \cdots, \pmb{m}^{N_i}_{i,j}(t)$, $i \in [1, I]$. Meanwhile, the other $N_j$ users train the local models based on similar sample $j$ but for different features, i.e., $\pmb{m}^1_{1,j}(t), \pmb{m}^2_{2,j}(t), \cdots, \pmb{m}^{N_j}_{i,j}(t)$, $j \in [1, J]$. Take a typical aggregation scheme, such as federated averaging (FedAvg)~\cite{zheng2022exploring}, as an example. At the server, a weighted model aggregation aggregates hybrid local models so that the global loss function $F(\pmb{m}_G(t))$ in~\eqref{eq_global_model} is minimized, where $w^n_i$ denotes the weight for each feature in the user's dataset as the server may apply a feature selection scheme. Although individual devices may not have a complete set of features to train independent models, they can contribute to a global model that enables independent predictions.

The aggregation in~\eqref{eq_global_model} can address the heterogeneity in the data and model updates. By employing advanced aggregation techniques, such as FedAvg with normalization factors and weighting schemes accounting for the diversity and size of the training data, we can align the updates cohesively. This approach ensures that the integration of VFL updates enhances the model by incorporating broad insights without overwhelming the HFL-type updates. Furthermore, these model updates can provide essential vertical insights that HFL alone cannot provide. By incorporating features from different domains, VFL updates enrich the global model with a more comprehensive understanding of the data relationships, which is critical for complex decision-making tasks.

\section{Convergence of HoVeFL}
In this section, we investigate the convergence of HoVeFL given non-IID data at the devices. 
\begin{assumption}\label{assumption}
	$\forall k \in {\cal K}$, we make the following assumptions:
	\begin{enumerate}
	\item $F(\pmb{m}_G(t))$ fulfills the Polyak-Lojasiewicz requirement~\cite{karimi2016linear} with a positive parameter $\rho$, indicating that $F(\pmb{m}_G(t))-F(\pmb{m}_G^*) \leq \frac{1}{2\rho} \left\| \nabla F(\pmb{m}_G(t)) \right\|^2 $ and ${\pmb{m}_G}^*$ minimizes $F(\pmb{m}_G(t))$, where $\| \cdot \|$ stands for norm operation;
	\item $F({\pmb{m}_G}(0)) - F({\pmb{m}_G(t)}^*) 
        =  \Theta$, where $\Theta$ is a constant.
	\end{enumerate}
\end{assumption}
Based on~\eqref{eq_lossFunc} and~\eqref{eq_global_model}, the global loss function minimized in HoVeFL can be defined as $\min_{\pmb{m}_G(t)} F(\pmb{m}_G(t))$. The gradient descent update is given by
\begin{align}
F(\pmb{m}_G(t \!+\! 1))\! =\! F(\pmb{m}_G(t)) \!-\! \mu_t \sum^{N}_{n=1} \nabla F(\pmb{m}_G(t)).
\label{eq_GlobalGradientUpdate}
\end{align}
Assume that the gradients of the loss functions $\nabla F(\pmb{m}_G(t))$ are $L$-Lipschitz continuous~\cite{elgabli2022fednew}. We have 
\begin{align}
\| \nabla F(\pmb{m}_G(t+1)) - &\nabla F(\pmb{m}_G(t)) \| \leq \nonumber \\
&L \| \pmb{m}_G(t+1) - \pmb{m}_G(t)\|.
\label{eq_Lipschitz}
\end{align}

Given the gradient descent update in~\eqref{eq_GlobalGradientUpdate} and the Lipschitz gradient condition in~\eqref{eq_Lipschitz}, an inequality that bounds the per-iteration change in the loss function is given by
\begin{align}
F&(\pmb{m}_G(t+1)) \leq F(\pmb{m}_G(t)) - \mu_t \Big( \nabla F(\pmb{m}_G(t)), \nonumber \\
&\sum_{n=1}^N \nabla F(\pmb{m}^n_{i,j}(t)) \Big) + \frac{L \mu_t^2}{2} \left\|\sum_{n=1}^N \nabla F(\pmb{m}^n_{i,j}(t))\right\|^2.
\label{eq_boundsLoss}
\end{align}

Furthermore, considering Cauchy-Schwarz and Jensen's inequality (which is used for handling expectations over convex functions) in the gradient descent update can smooth the model updates despite the heterogeneity in $D_j^n(t)$. In particular, the expectation of the convex combination of local losses (which are possibly based on different data distributions) is as good as the convex combination of their expectations, ensuring an averaging effect that is central to the convergence of HoVeFL. 

To derive explicit upper bounds of HoVeFL, we apply the Cauchy-Schwarz and Jensen's inequalities to~\eqref{eq_boundsLoss}, i.e.,
\begin{align}
\Big( \!\nabla \! F(\pmb{m}_G(t)), \sum_{n=1}^{N} \!\nabla \! F(\pmb{m}^n_{i,j}(t)) \! \Big) \!=\! \| \nabla F(\pmb{m}_G(t)) \|^2,
\label{eq_CauchySchwarz}
\end{align}
\begin{align}
\left\|\sum_{n=1}^N \nabla F(\pmb{m}^n_{i,j}(t))\right\|^2 \leq \Big(\|\nabla F(\pmb{m}_G(t))\| + \sigma\Big)^2.
\label{eq_Jensen}
\end{align}
Let $\sigma$ denote a variance of the gradients due to non-IID data
\begin{align}
\sum^N_{n=1} \| \nabla F(\pmb{m}^n_{i,j}(t)) - \nabla \overline{F}(\pmb{m}^n_{i,j}(t)) \|^2 \leq \sigma^2,
\label{eq_variance}
\end{align}
where $\overline{F}(\pmb{m}^n_{i,j}(t))$ is the average of the model gradients.

By substituting~\eqref{eq_CauchySchwarz} and~\eqref{eq_Jensen} into~\eqref{eq_boundsLoss}, we have 
\begin{align}
F(\pmb{m}_G(t+1)) \leq &F(\pmb{m}_G(t)) - \mu_t \| \nabla F(\pmb{m}_G(t)) \|^2 + \nonumber \\
&\frac{L \mu_t^2}{2} (\| \nabla F(\pmb{m}_G(t)) \|+ \sigma)^2.
\label{eq_bound}
\end{align}

According to Cauchy-Schwarz Inequality, we have 
\begin{align}
F(\pmb{m}_G(t+1)) \leq &F(\pmb{m}_G(t)) + ({L \mu_t^2} - \mu_t)  \| \nabla F(\pmb{m}_G(t)) \|^2 \nonumber \\
& + {L \mu_t^2} \sigma^2.
\label{eq_bound 2}
\end{align}

By subtracting $F(\pmb{m}_G^*)$ from both sides of \eqref{eq_bound 2}, it follows 
\begin{align}
F(\pmb{m}_G(t+1)) - F(\pmb{m}_G^*) & \leq F(\pmb{m}_G(t))-  F(\pmb{m}_G^*) + {L \mu_t^2} \sigma^2 \nonumber \\ 
& + ({L \mu_t^2} - \mu_t)  \| \nabla F(\pmb{m}_G(t)) \|^2.
\label{eq_bound 3}
\end{align}

Considering Polyak-Lojasiewicz condition in Assumption~\ref{assumption}-1), we can obtain
\begin{align}
\left\| \nabla F(\pmb{m}_G(t+1)) \right\|^2 \leq & [2\rho ({L \mu_t^2} - \mu_t) + 1 ]\left\| \nabla F(\pmb{m}_G(t)) \right\|^2 \nonumber \\ 
& + 2\rho {L \mu_t^2} \sigma^2.
\label{eq_bound 4}
\end{align}

Based on the recurrence expression~\eqref{eq_bound 4}, we can obtain the upper bound of $\|\nabla F(\pmb{m}_G(t)) \|^2$, as given by
\begin{align}
&\left\| \nabla F(\pmb{m}_G(t)) \right\|^2 \leq [2\rho ({L \mu_t^2} - \mu_t) + 1 ]^t\left\| \nabla F(\pmb{m}_G(0)) \right\|^2 \nonumber \\ 
&~~~~~~~~ + 2\rho {L \mu_t^2} \sigma^2 \sum_{i=1}^t [2\rho ({L \mu_t^2} - \mu_t) + 1 ]^{t-1}.
\label{eq_bound 5}
\end{align}

Using the Polyak-Lojasiewicz condition again and Assumption~\ref{assumption}-2) provides
\begin{align}
F(\pmb{m}_G(t)) & - F(\pmb{m}_G^*)  
\leq [2\rho ({L \mu_t^2} - \mu_t) + 1 ]^t  \Theta \nonumber \\ 
& + 2\rho {L \mu_t^2} \sigma^2 \sum_{i=1}^t [2\rho ({L \mu_t^2} - \mu_t) + 1 ]^{i-1}.
\label{eq_bound 6}
\end{align}

According to~\eqref{eq_bound 6}, the convergence bound of HoVeFL in $T$ rounds can be given by
\begin{align}
F(\pmb{m}_G(T)) & -  F(\pmb{m}_G^*)  
\leq [2\rho ({L \mu_t^2} - \mu_t) + 1 ]^T  \Theta \nonumber \\ 
& + \frac{L \mu_t \sigma^2\{[2\rho ({L \mu_t^2} - \mu_t) + 1 ]^{T+1} -1\}}{L\mu_t - 1 }.
\label{eq_boundFinal}
\end{align}

To ensure the convergence of HoVeFL, the learning rate satisfies $\mu_t \leq \frac{1}{L}$ since $2\rho ({L \mu_t^2} - \mu_t) + 1 \leq 1$.
\begin{corollary}\label{rema}
The convergence upper bound is convex with respect to the number of communication rounds, i.e., $T$, if $\mu_t \leq \frac{1}{L}$ and $\Theta \geq \frac{(1- 2 \rho \mu_t + 2 \rho L \mu_t^2) L \mu_t \sigma_2^2}{1 -  L \mu_t}$. 	
\end{corollary}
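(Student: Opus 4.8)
The plan is to treat the right-hand side of~\eqref{eq_boundFinal} as a function $B(T)$ of the integer round index $T$ and to show that it is convex under the stated hypotheses. First I would abbreviate the per-round contraction factor by writing $r = 2\rho(L\mu_t^2 - \mu_t) + 1 = 1 - 2\rho\mu_t(1 - L\mu_t)$, so that $B(T) = r^T \Theta + \frac{L\mu_t\sigma^2\,(r^{T+1} - 1)}{L\mu_t - 1}$. The hypothesis $\mu_t \leq \frac{1}{L}$ yields $1 - L\mu_t \geq 0$, hence $L\mu_t^2 - \mu_t = \mu_t(L\mu_t - 1) \leq 0$ and therefore $r \leq 1$; I would additionally record that $r > 0$ holds whenever $2\rho\mu_t(1 - L\mu_t) < 1$, so that $r \in (0,1]$ and the geometric factor $r^T$ stays positive.

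Next I would isolate the $T$-dependent part by factoring $r^{T+1} = r\cdot r^T$, which recasts the bound as $B(T) = \big(\Theta + \frac{r L\mu_t\sigma^2}{L\mu_t - 1}\big) r^T - \frac{L\mu_t\sigma^2}{L\mu_t - 1}$. Since the trailing term does not depend on $T$, the convexity of $B$ is governed entirely by the sign of the coefficient $C = \Theta + \frac{r L\mu_t\sigma^2}{L\mu_t - 1}$ that multiplies $r^T$.

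I would then verify convexity through the second forward difference, which is the natural discrete criterion for integer $T$. A direct computation gives $B(T+1) + B(T-1) - 2B(T) = C\,r^{T-1}(r-1)^2$, where the constant term cancels identically. Because $r^{T-1} > 0$ and $(r-1)^2 \geq 0$, this second difference is nonnegative precisely when $C \geq 0$. Relaxing $T$ to a continuous variable gives the equivalent statement $\frac{d^2 B}{dT^2} = C(\ln r)^2 r^T$, with the same sign condition.

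Finally I would convert $C \geq 0$ into the claimed bound on $\Theta$. Writing $L\mu_t - 1 = -(1 - L\mu_t) \leq 0$, the inequality $\Theta + \frac{r L\mu_t\sigma^2}{L\mu_t - 1} \geq 0$ rearranges to $\Theta \geq \frac{r L\mu_t\sigma^2}{1 - L\mu_t}$, and substituting $r = 1 - 2\rho\mu_t + 2\rho L\mu_t^2$ reproduces exactly the hypothesis $\Theta \geq \frac{(1 - 2\rho\mu_t + 2\rho L\mu_t^2) L\mu_t\sigma^2}{1 - L\mu_t}$. The main obstacle is one of bookkeeping rather than depth: the sign of the denominator factor $L\mu_t - 1$ must be tracked carefully, and $r > 0$ must be confirmed, since both are needed to conclude that the second difference inherits its sign solely from $C$. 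Once these are settled, the equivalence between $C \geq 0$ and the stated $\Theta$ condition makes the convexity follow mechanically.
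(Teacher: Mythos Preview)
Your proposal is correct and follows essentially the same approach as the paper: treat the bound as a function ${\cal B}(T)$, compute its second derivative with respect to $T$, and read off the conditions $\mu_t \leq \frac{1}{L}$ and $\Theta \geq \frac{(1-2\rho\mu_t+2\rho L\mu_t^2)L\mu_t\sigma^2}{1-L\mu_t}$ from $\frac{\partial^2 {\cal B}}{\partial T^2} \geq 0$. Your write-up is considerably more explicit than the paper's one-line proof---you factor the bound as $C\,r^T + \text{const}$, track the sign of $r$ and of the denominator $L\mu_t-1$, and supplement the continuous second derivative with the discrete second forward difference---but the underlying idea is the same.
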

\begin{proof}
Write the right-hand side of \eqref{eq_boundFinal} as a function of $T$, i.e., ${\cal{B}}(T)$. By deriving the second-order derivative of ${\cal B}(T)$ with respect to $T$ and letting $\frac{\partial^2 {\cal B}(T)}{\partial T^2} \geq 0$, we obtain $\mu_t \leq \frac{1}{L}$ and $\Theta \geq \frac{(1- 2 \rho \mu_t + 2 \rho L \mu_t^2) L \mu_t \sigma_2^2}{1 -  L \mu_t}$.
\end{proof}

It is important to address the difference between the update strategy in HoVeFL and the one in FedProx. While HoVeFL shares similarities in terms of employing a proximal term to stabilize the learning across non-IID data distributions, HoVeFL extends by integrating HFL and VFL. This integration is crucial in EdgeIoT, where device capabilities and data characteristics are diverse. Our convergence analysis demonstrates the theoretical underpinnings of our method.

\begin{figure}[htb]
\centering
\includegraphics[width=3.7in]{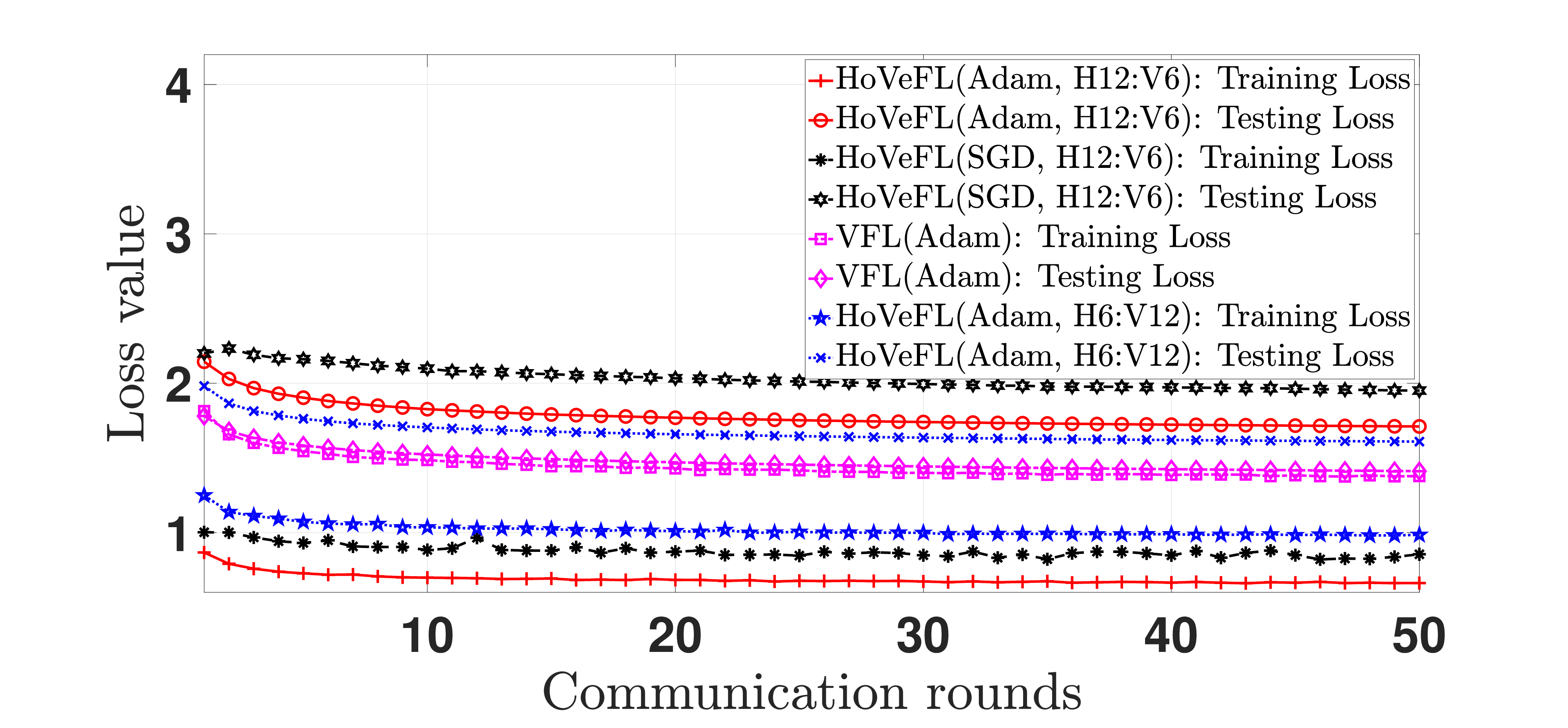}
\caption{The training and testing loss based on CIFAR-10. }
\label{fig_loss_cifar10}
\end{figure}

\begin{figure}[htb]
\centering
\includegraphics[width=3.7in]{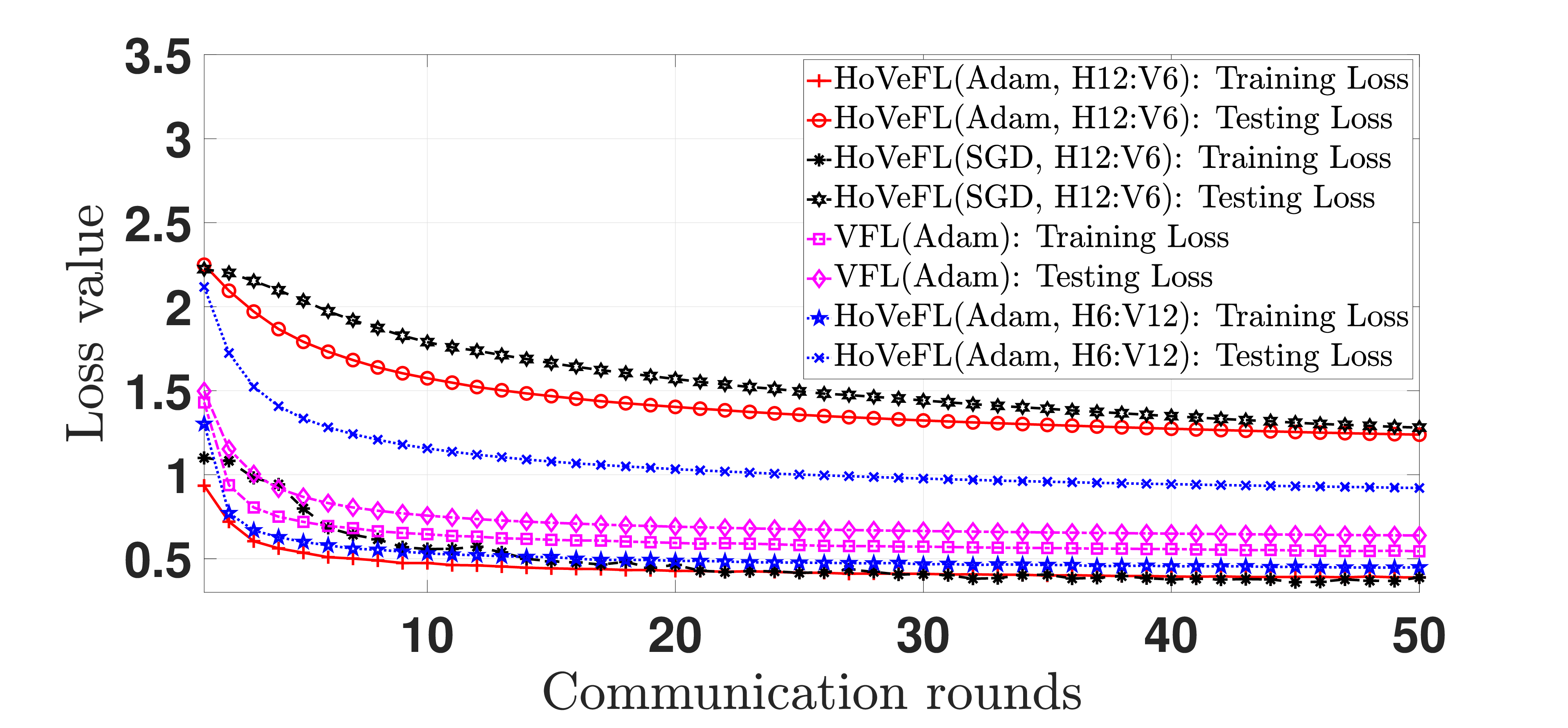}
\caption{The training and testing loss based on SVHN. }
\label{fig_loss_svhn}
\end{figure}

\section{Numerical Results}
In this section, we implement the proposed HoVeFL using PyTorch on a workstation with a GeForce RTX 2080 GPU.

Specific algorithmic steps of HoVeFL are as follows. 
\begin{itemize}
\item Local Data Processing and Feature Extraction: Each device processes its data locally using a Split Convolution Neural Network
with convolution layers to extract relevant features. For devices engaged in VFL, feature extraction maximizes the information gain from different data types; for HFL devices, the aim is to harmonize feature representation across non-IID datasets. 
\item Local Model Update: The local loss function in~\eqref{eq_lossFunc} is minimized, leading to an optimal local model ${\pmb{m}^{n*}_{i,j}(t)}$ at device $n$. Local model updates can be performed using a specified learning algorithm (e.g., SGD for handling data heterogeneity).
\item Model Aggregation: At the server, the model aggregation is conducted to combine vertical and horizontal updates. 
\item Global Model Synthesis and Distribution: After aggregation, a weighted model aggregation aggregates hybrid local models so that a global loss function $F(\pmb{m}_G(t))$ in~\eqref{eq_global_model} is minimized. The global model $\pmb{m}_G(t)$ is distributed across devices for further local training and refinement.
\end{itemize}

Figs.~\ref{fig_loss_cifar10} and~\ref{fig_loss_svhn} show the training and testing losses of HoVeFL on CIFAR-10 and street view house numbers (SVHN) datasets, respectively. Adam and SGD optimizers were tested on HoVeFL. We also adapted the numbers of devices performing HFL (i.e., $\bf{N_i}$) and devices performing VFL (i.e., $\bf{N_j}$). The performance of HoVeFL is compared with the one where all $N$ devices conduct VFL. 

Given 12 HFL devices and 6 VFL devices in HoVeFL, the training and testing losses with the Adam optimizer converges much faster than those with the SGD optimizer, confirming the convergence of Adam and SGD in~\cite{wilson2017marginal}. This validates the effectiveness of HoVeFL with heterogeneous HFL and VFL devices. Given the Adam optimizer, the testing loss of HoVeFL with 12 HFL devices and 6 VFL devices is 5.5\% and 25.2\% higher than the one with 6 HFL devices and 12 VFL devices on the CIFAR-10 and SVHN datasets, respectively. This is because the training dataset at the VFL devices ($N_j$) in HoVeFL is IID (due to training the local models on similar data samples $j$ but for different features) while the data at the HFL devices ($N_i$) in HoVeFL is non-IID (due to the same data feature $i$ but on different data samples). As more devices train IID data in HoVeFL, the model updates (parameters or gradients) are more homogeneous, resulting in more accurate updates toward minimizing the training and testing losses. 

Given the CIFAR-10 and SVHN datasets, the testing losses of HoVeFL (Adam, H6:H12) are 5.9\% and 33.3\% higher than those of VFL (Adam), respectively. This is reasonable since the devices carrying out HFL in HoVeFL hold data with the same feature space but distinct and potentially non-IID data samples. Moreover, the testing losses of HoVeFL (Adam, H12:H6) are 1.6\% and 2.5\% higher than those of HFL (Adam), respectively. This is because a combination of different data samples from VFL devices and different features from HFL devices means that the model must generalize across different characteristics and types of data. Handling non-IID data in such a diverse setup is inherently more difficult. The model may not optimize as effectively for any single type of data in HFL, leading to higher test losses due to the increased variance in model predictions.

In addition, Figs.~\ref{fig_loss_cifar10} and~\ref{fig_loss_svhn} show that the training and testing losses of HoVeFL have a significant difference, compared to VFL. This is because the training of HoVeFL includes a wide variety of data types and distributions, leading to difficulties in achieving a model that generalizes well across all types of data. The model can capture specific features or patterns present in the training data but not representative of the general population. Moreover, the ambitious goal of generalizing across different data features and different populations makes HoVeFL more susceptible to discrepancies in performance when faced with new or unseen data in the test set. On the other hand, VFL focuses on vertically integrating features for the same set of data samples. The diversity in data types is less pronounced in VFL than it is in HoVeFL. The alignment of features across the same set of the clients in VFL ensures that training and testing datasets are likely to follow similar distributions, thus maintaining a closer performance metric between training and testing.

\vspace{-5pt}
\section{Conclusions}
In this letter, we proposed a new HoVeFL for EdgeIoT, which enables some devices to train local models on different, independent features of the same data samples, while the other devices train local models for the same features using distinct, non-overlapping data samples. The training and testing losses were evaluated on the CIFAR-10 and SVHN datasets, with Adam and SGD. We also experimentally varied the number of devices performing HFL and VFL in HoVeFL. The performance was compared with a scenario where all EdgeIoT devices conduct VFL.

\ifCLASSOPTIONcaptionsoff
  \newpage
\fi

\vspace{-5pt}
\bibliographystyle{IEEEtran}
\bibliography{bibHVFL}

\end{document}